
\documentclass[11pt,fleqn]{article}
\usepackage{fullpage}
\usepackage{amsthm}

\usepackage{amsmath,amssymb}
\usepackage[boxed]{algorithm}
\usepackage[noend]{algorithmic}
\usepackage{bbm}
\usepackage{graphicx}
\usepackage{subfigure}
\usepackage{xspace}
\usepackage{color}
\usepackage{afterpage}
\usepackage{ifpdf}
\ifpdf    
\usepackage{hyperref}
\else    
\usepackage[hypertex]{hyperref}
\fi

\newtheorem{theorem}{Theorem}[section]
\newtheorem{lemma}[theorem]{Lemma}

\newtheorem{corollary}[theorem]{Corollary}
\newtheorem{definition}[theorem]{Definition}

\newtheorem{notation}[theorem]{Notation}

\newcommand {\ignore} [1] {}

\DeclareMathOperator{\supp}{supp}

\newcommand{\etal}{{\em et al.\ }\xspace}

\providecommand{\eqdef}{:=}

\DeclareMathOperator{\texp}{Texp}

\title{Metric Decompositions of Path-Separable Graphs%
\thanks{The final publication of the paper is available at Springer via http://dx.doi.org/10.1007/s00453-016-0213-0.}}
\author{Lior Kamma%
\thanks{Work supported in part by a US-Israel BSF grant \#2010418
and an Israel Science Foundation grant \#897/13.
Email: \texttt{\{lior.kamma,robert.krauthgamer\}@weizmann.ac.il}
}
\\ The Weizmann Institute
\and Robert Krauthgamer\footnotemark[\value{footnote}]
\\ The Weizmann Institute
}

\begin{document}

\maketitle

\begin{abstract}
A prominent tool in many problems involving metric spaces is 
a notion of randomized low-diameter decomposition.
Loosely speaking, $\beta$-decomposition refers to a probability distribution 
over partitions of the metric into sets of low diameter, 
such that nearby points (parameterized by $\beta>0$)
are likely to be ``clustered'' together.
Applying this notion to the shortest-path metric in edge-weighted graphs,
it is known that $n$-vertex graphs admit 
an $O(\ln n)$-padded decomposition \cite{Bartal96},
and that excluded-minor graphs admit $O(1)$-padded decomposition 
\cite{KPR93,FT03,AGGNT14}.

We design decompositions to the family of $p$-path-separable graphs,
which was defined by Abraham and Gavoille~\cite{AG06}
and refers to graphs that admit vertex-separators consisting 
of at most $p$ shortest paths in the graph.

Our main result is that every $p$-path-separable $n$-vertex graph 
admits an $O(\ln (p \ln n))$-decomposition,
which refines the $O(\ln n)$ bound for general graphs, 
and provides new bounds for families like bounded-treewidth graphs.
Technically, our clustering process differs from previous ones 
by working in (the shortest-path metric of) carefully chosen subgraphs.
\end{abstract}

\section{Introduction} \label{sec:intro}
In recent decades, the problem of decomposing a metric space into low-diameter subspaces has become a key step in the solution for many problems, including metric embeddings (e.g. \cite{Assouad83,Bartal96,Rao99,GKL03,FRT04,KLMN05}), distance oracles and routing schemes design (e.g. \cite{AP90,DSB97,Talwar04,CGMZ05,MN07}), graph sparsification (e.g. \cite{EGKRTT2010,KKN14}) and optimization problems, such as multicommodity cuts \cite{KPR93,GVY96,LR99}, $0$-extension \cite{CKR04} and the traveling salesman problem \cite{Talwar04}.

Following the recent literature, we focus on \emph{randomized} decompositions, which loosely speaking refers to a probability distribution over partitions 
of a metric space into sets (called clusters) of low diameter, 
such that nearby points are more likely to be ``clustered'' together.
The formal definitions follows.

Let $(X,d)$ be a metric space,
and denote the ball of radius $\varrho>0$ around $x\in X$ by $B(x,\varrho) \eqdef \{y \in X : d(x,y) \le \varrho\}$. 
Let $\Pi$ be a partition of $X$. Every $S \in \Pi$ is called a {\em cluster},
and for every $x\in X$, let $\Pi(x)$ denote the unique cluster 
$S \in \Pi$ such that $x \in S$.
We will be using the following definition of Abraham \etal \cite{AGGNT14}.

\begin{definition}\label{def:beta} 
A metric space $(X,d)$ is called {\em $\beta$-decomposable} for $\beta >0$ if for every $\Delta>0$ there is a probability distribution $\mu$ over partitions of $X$, satisfying the following properties.
\begin{enumerate}
\renewcommand{\theenumi}{(\alph{enumi})}
	\item \label{it:DiameterBound}
	Diameter Bound: For every $\Pi \in \supp(\mu)$ and $S \in \Pi$, $diam(S) \le \Delta$.
	\item \label{it:PadProb}
	Padding: For every $x \in X$ and $0 \le \gamma \le 1/100$, $$\Pr_{\Pi \sim \mu}[B(x,\gamma \Delta) \subseteq \Pi(x) \ ] \ge 2^{- \beta \gamma} \; .$$
\end{enumerate}
\end{definition}

A slightly different definition that is common in the literature 
under the name {\em padded decomposition}, see e.g.\ \cite{KLMN05,LN2004}, 
is the special case of setting in~\ref{it:PadProb} $\gamma=1/\beta$,
i.e., requiring that for every $x \in X$, with probability at least $1/2$ 
the entire ball $B(x,\Delta/\beta)$ is contained in a single cluster of $\Pi$.
Our results provide constructions that satisfy Definition~\ref{def:beta}, 
and thus immediately apply also to the more common definition.

The metric spaces we study arise as shortest-path metrics in (certain) graphs. 
Specifically, given $G = (V,E,w)$ 
a connected graph with non-negative edge weights, let $d_G$ denote the shortest-path metric induced on $V$ by $G$. Denote by $B_G(u,\varrho)$ the ball of radius $\varrho>0$ around $u \in V$ in the metric space $(V,d_G)$.
We say that a graph $G$ is $\beta$-decomposable if the metric space $(V,d_G)$ is $\beta$-decomposable.

Bartal~\cite{Bartal96} proved that for every $n$-point metric space, $\beta = O(\log n)$, and that this bound is tight for general metric spaces, thus motivating an extensive research on restricted families of metric spaces. 
Notable progress has been made for families defined by topological restrictions, such as shortest-path metrics in graphs excluding a fixed minor \cite{KPR93,FT03,AGGNT14} or bounded-genus graphs \cite{LS10,AGGNT14} and geometric restrictions, such as a bounded doubling dimension \cite{GKL03} or hyperbolic structure \cite{KL06}.

In this paper we consider metrics induced by graphs of bounded ``path separability'', which is a blend of topological and geometric restrictions,
as defined below.

\paragraph{Shortest-Path Separators.} 
Given a graph $G = (V,E,w)$ and a set $S \subseteq V$, an {\em $S$-flap} is a connected component of $G[V \setminus S]$. We say that $S$ is a (balanced) {\em vertex separator} if every $S$-flap $U$ has size $|U| \le |V|/2$. Vertex separators are widely used in divide-and-conquer algorithms. 
Thorup \cite{T04} observed that every planar graph has a vertex separator composed of three shortest paths \footnote{If we relax the balance constraint, and require that every $S$-flap $U$ has size $|U| \le 2|V|/3$, then two shortest paths suffice.}, and used this property to design distance and reachability oracles for planar graphs.
Abraham and Gavoille \cite{AG06} extended this notion and defined {\em path separability}. Intuitively, a graph is path separable if it has a vertex separator composed of a few shortest-paths.
\begin{notation}
For sets $X_1, \ldots, X_m$ and $S \subseteq [m]$, we denote $X_S \eqdef \bigcup_{j \in S}X_j$. 
\end{notation}

\begin{definition}\label{def:pathDec}\cite{AG06}
A graph $G=(V,E,w)$ is called {\em $p$-path separable} for $p \in \mathbb{N}$ if there exists $S \subset V$ 
such that the following holds.
\begin{enumerate}
	\item There exist $P_1,\ldots,P_m \subseteq V$ such that $S = P_{[m]}$ and every $P_j$ is the union of $p_j$ shortest-paths in $G_j \eqdef G \setminus P_{[j-1]}$.
	\item $\sum_{j \in [m]}p_j \le p$.
	\item $S$ is a vertex separator, and every $S$-flap is $p$-path-separable.
\end{enumerate}
\end{definition}
Abraham and Gavoille showed that for every graph $H$ there is a number $p = p(H)$ such that every graph $(V,E)$ that excludes $H$ as a minor is $p$-path-separable under every edge weights $w$.
Diot and Gavoille \cite{DG10} proved that every graph of treewidth $t$ is $\left\lceil (t-1)/2 \right\rceil$-path-separable with every edge weights $w$.

\subsection{Main Results}
\begin{theorem} \label{th:sepToDec}
Every $p$-path-separable graph $(V,E,w)$ is $O(\ln(p \ln |V|))$-decomposable.
\end{theorem}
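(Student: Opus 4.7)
The plan is to build the partition recursively along the separator hierarchy of Definition~\ref{def:pathDec}. Each invocation receives a subgraph $G'$ with separator $S=P_{[m]}$ of at most $p$ shortest paths; the $S$-flaps are again $p$-path-separable and of size at most $|V(G')|/2$, so the recursion has depth $O(\log|V|)$ and exposes altogether a family $\mathcal{P}$ of $K=O(p\log|V|)$ distinct shortest paths. At each invocation I would carve $G'$ in a Bartal-style~\cite{Bartal96} manner using the separator paths as ``centers'': sample a uniformly random order $\pi$ on $\{P_1,\dots,P_m\}$ and independent radii $R_{P_j}\in(\Delta/4,\Delta/2)$, and assign every $v\in V(G')$ that lies within distance $R_{P_j}$ of some $P_j$ (in the shortest-path metric of $G'$) to the earliest such $P_j$ under $\pi$. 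Vertices that remain uncaptured lie entirely inside some $S$-flap, and the algorithm recurses on those flaps. Measuring distances inside $G'$ rather than in the original $G$ is the ``carefully chosen subgraph'' twist announced in the abstract.

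For padding at a fixed $x\in V$, the ball $B(x,\gamma\Delta)$ can only be split by some path $P\in\mathcal{P}$ whose random radius $R_P$ falls inside an $O(\gamma\Delta)$-window around $d(x,P)$. Enumerating the paths in $\mathcal{P}$ by their distance from $x$ as $P_{(1)},P_{(2)},\dots$, a Bartal-style random-order argument bounds the probability that $P_{(i)}$ is the first to split $x$ by $O(\gamma)/i$, and summing gives
\[
\Pr[\text{padded}]\;\ge\;2^{-O(\gamma\log K)},
\]
which yields $\beta=O(\log K)=O(\ln(p\ln|V|))$ as required. Although the carving is executed level by level, the bad events from different levels can be combined into a single Bartal-type sum over $|\mathcal{P}|$ because each path contributes at most one bad window and the random choices at different levels are independent.

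The main obstacle I expect is the diameter bound for path-ball clusters. A naive metric ball around a long shortest path $P$ has diameter as large as $\diam(P)+2R_P$, potentially far exceeding $\Delta$. This forces the choice of subgraph to do real work: one measures the ball around $P$ inside a subgraph where only an $O(\Delta)$-length portion of $P$ is accessible, for instance by first chopping each $P$ into consecutive $O(\Delta)$-length segments using an independent uniform offset along $P$. Chopping at most doubles the number of effective centers near any given $x$ and therefore costs only an additive $O(1)$ in $\beta$, while bounding the along-$P$ diameter by $O(\Delta)$; combined with $R_P\le\Delta/2$, each cluster has diameter at most $\Delta$, fulfilling item~\ref{it:DiameterBound} of Definition~\ref{def:beta}. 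A secondary subtlety is coordinating the separator tree with the random carving: the cleanest way is to fix the separator hierarchy deterministically first, and only then sample the orderings and radii attached to the exposed paths.
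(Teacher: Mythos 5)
Your recursive carving scheme along the separator hierarchy, carving inside subgraphs, and chopping each separator path into $O(\Delta)$-length pieces, is close in spirit to the paper's construction: the paper places a $\Delta/4$-net on each separator path (taking the net points, not segments, as centers), which is essentially your chopping idea, and it indeed carves balls in the subgraph $G_j$ rather than in $G$. The local counting is also the same: because the paths are shortest paths in $G_j$, at most $4$ net points per path can threaten a fixed $x$, giving $k = O(p\log n)$ threatening centers in total.

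The genuine gap is in the padding analysis. You propose uniform radii in $(\Delta/4,\Delta/2)$ together with a uniformly random order on the paths, and then invoke a Bartal/CKR harmonic-sum argument to get cut probability $O(\gamma\log K)$. But the random order in your scheme is only \emph{within} a single recursion level: level-$\ell$ paths are always processed before level-$(\ell+1)$ paths, because the algorithm is hierarchical. The CKR argument, however, needs the $i$-th closest threatening center to be a uniformly random one among the $i$ closest, which fails once the order is partially fixed by the recursion. Your remark that the bad events ``can be combined into a single Bartal-type sum because the random choices at different levels are independent'' does not close this: independence of the radii alone, with a fixed order and uniform radii, gives only $\Pr[\mathcal{E}] \lesssim \gamma k = O(\gamma\, p\log n)$, which is the trivial bound, not $O(\gamma\log(p\log n))$. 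The paper avoids this by dropping the random ordering entirely and instead sampling each radius from a \emph{truncated exponential} with $\lambda = \Delta/(10\ln(9p\log n))$; the exponential's memoryless-type decay yields an inductive recursion (Lemma~\ref{l:ind}) of the form $\Pr[\mathcal{E}\mid\mathcal{E}_j] \le q + (1-q')\Pr[\mathcal{E}\mid\mathcal{E}_{j+1}]$ with a multiplicative gain, which telescopes to $1-e^{-O(\gamma\ln k)}$ regardless of how the centers are ordered across levels. Replacing your uniform radii by the truncated exponential (and fixing the constants in the diameter bound — with segment length $c\Delta$ and $R\le\Delta/2$ the cluster diameter is $c\Delta + \Delta > \Delta$, whereas net points with $R\le 2\Delta/5$ give diameter $\le 4\Delta/5$) would repair the argument.
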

We further note that if, in addition, for every subgraph $G'$ of $G$, we can find a $p$-path-separator in polynomial time then we can efficiently sample from the distribution guaranteed in Theorem~\ref{th:sepToDec}.

Combining our theorem with the result of Diot and Gavoille \cite{DG10} we get an upper bound for bounded treewidth graphs.
\begin{corollary}\label{cor:twToDec}
Every graph $(V,E)$ of treewidth $t$ with every edge weights $w$ is $O(\ln(t \ln |V|))$-decomposable.
\end{corollary}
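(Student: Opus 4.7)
The plan is to derive Corollary~\ref{cor:twToDec} as an immediate consequence of Theorem~\ref{th:sepToDec} composed with the Diot--Gavoille structural result cited just before the statement.

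First, I would invoke Diot and Gavoille~\cite{DG10}, which guarantees that any graph $G=(V,E)$ of treewidth $t$ is $p$-path-separable for $p=\lceil (t-1)/2\rceil$, and that this holds under arbitrary non-negative edge weights $w$ (note that path-separability is defined for weighted graphs in Definition~\ref{def:pathDec}, but the \emph{bound} $p$ depends only on the treewidth, not on $w$). In particular $p = O(t)$.

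Next, I would apply Theorem~\ref{th:sepToDec} to the $p$-path-separable graph $(V,E,w)$, concluding that it is $\beta$-decomposable for $\beta = O(\ln(p\ln|V|))$. Substituting $p=\lceil(t-1)/2\rceil$ and using monotonicity of $\ln$, we obtain
\[
  \beta \;=\; O\bigl(\ln(p\ln|V|)\bigr) \;=\; O\bigl(\ln(t\ln|V|)\bigr),
\]
which is exactly the claimed bound.

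I do not expect any genuine obstacle here: the corollary is a one-line composition and the only thing to check is that the Diot--Gavoille bound is indeed stated for arbitrary edge weights (so that Theorem~\ref{th:sepToDec} can be applied to the weighted graph) and that the asymptotic simplification $\ln((t/2)\ln|V|)=\Theta(\ln(t\ln|V|))$ is valid in the regimes $t=O(1)$ and $t\to\infty$ alike, which it is since both quantities are $\Theta(\ln t+\ln\ln|V|)$.
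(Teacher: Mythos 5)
Your proposal is exactly the paper's intended argument: cite Diot--Gavoille to get $\lceil(t-1)/2\rceil$-path-separability under arbitrary edge weights, then plug $p=O(t)$ into Theorem~\ref{th:sepToDec} and absorb the constant into the $O(\cdot)$. The paper states the corollary with just this one-line composition, and your write-up (including the sanity check on the asymptotics) matches it.
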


Previously, no bound was known for $p$-path-separable graphs other than $O(\log |V|)$ due to Bartal \cite{Bartal96}.
For graphs of treewidth $t$ the known upper bound is $\beta = O(t)$ due to \cite{AGGNT14}. 
Our decomposition provides a tradeoff between $t$ and $|V|$ 
and matches or improves all other bounds when $t \ge \ln\ln|V|$. 
It is conjectured that $\beta = O(\log t)$, 
which would be tight due to the lower bound of Bartal~\cite{Bartal96},
and our result provide partial evidence in favor of this conjecture.

Many known results ``interface'' the metric only through decompositions, 
and thus plugging in our decomposition bounds immediately yields new results for the aforementioned families of metric spaces.
For example, using a result from \cite{KLMN05} we conclude that 
every $n$-vertex graph of treewidth $t$ (with every edge weights $w$) 
can be embedded in a Hilbert space with distortion $O(\sqrt{\ln(t \ln n) \cdot \ln n})$, which improves over the known bound $O(\sqrt{t \ln n})$ whenever $t \ge \ln \ln n$.

\subsection{Techniques}
\paragraph{Carving Random Balls.} A common approach for constructing a decomposition of a metric $(V,d)$ is to choose a sequence of {\em centers} $c_1,\ldots,c_k \in V$ and corresponding radii $R_1,\ldots,R_k \le \Delta/2$, where the choice of centers and/or radii may involve randomization, and then define
$$S_j = \{v \in V : j = \min\{i \in [k] : d(v,c_i) \le R_i \}\} \;.$$
Clearly, each $S_j$ has diameter at most $\Delta$.
This approach goes back to \cite{Bartal96}, and has seen many useful variations, for example, randomly ordering the centers \cite{CKR04} or reducing the number of centers \cite{CKR04,GKL03}.
As it turns out, it is enough to bound the number of centers {\em locally}. More formally, for every $v \in V$ we  control the number of centers $c_j$ that threaten $v$ in the sense that $\Pr[d(v,c_j) < R_j + \gamma \Delta] > 0$.

\paragraph{Carving Balls in Subgraphs.} Inspired by ideas from \cite{AGGNT14}, we introduce in this paper another modification to the approach described theretofore. In addition to the above, for every center $c_j$ we choose a corresponding subgraph $G_j$ of $G$, such that $c_j \in V(G_j)$, and define 
$$S_j = \{v \in V : j = \min\{i \in [k] : v \in V(G_i) \;\; and \;\; d_{G_i}(v,c_i) \le R_i \}\} \;.$$
Choosing the subgraphs and centers carefully allows us to reduce the number of centers that threaten a vertex $v$ in two ways. The first and more obvious manner is by making sure that $v \in V(G_i)$ for only a few indices $i$. The second aspect is a bit more subtle. Since distances are considered in subgraphs of $G$, they might be larger than the corresponding distances in the original graph $G$, as demonstrated in Figure~\ref{f:sub}, thus reducing the number of threateners of a vertex $v$.

\begin{figure*}[t]
  \begin{center}
    \subfigure[{Original unit weighted graph $G$. 
      When all outer-cycle vertices are centers and all $R_i \in [1,2]$,
      vertex $x$ is threatened by $8$ centers. 
    }] 
{\label{f:sub-a}\includegraphics[scale=0.4]{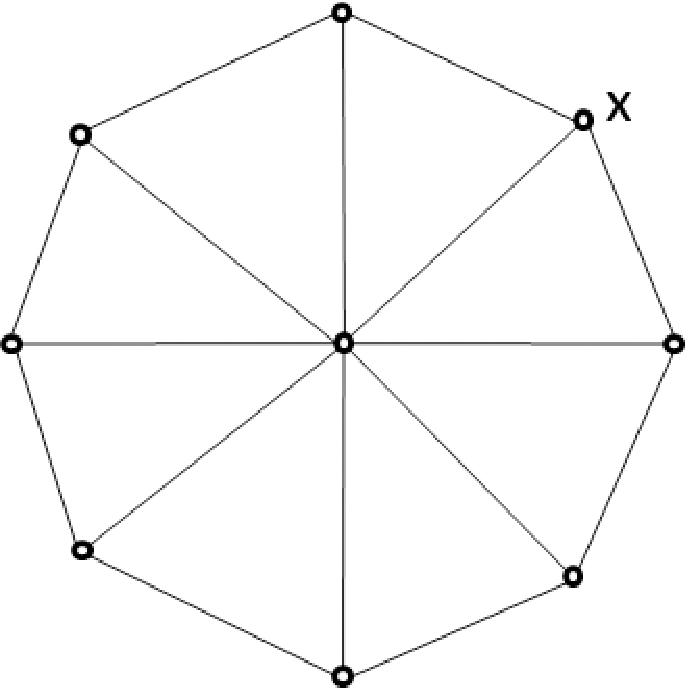}} \hspace{3pc}
    \ 
    \subfigure[If balls are carved in this subgraph,
      $x$ is threatened by only $5$ centers.]
    {\label{f:sub-b}\includegraphics[scale=0.4]{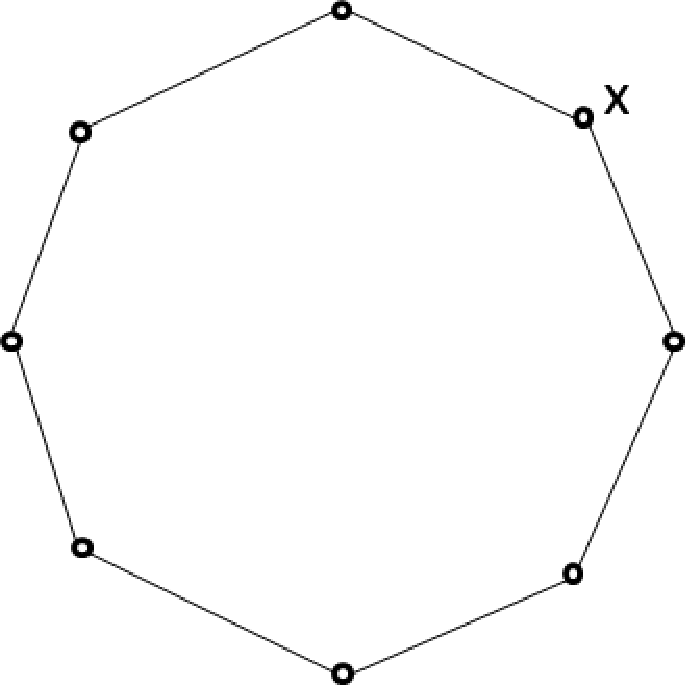}}
  \end{center}
  \caption{When carving balls in a subgraph of $G$, the number of threateners to a vertex is smaller.}
  \label{f:sub}
\end{figure*}

Note that we need to ensure that $\{S_j\}_{j \in [k]}$ is indeed a partition of $V$, i.e. that the balls $\{B_{G_j}(c_j,R_j)\}_{j \in [k]}$ cover all of $V$.

\subsection{Preliminaries}
\paragraph{The Truncated Exponential Distribution.} Define the {\em truncated exponential} with parameters $\lambda >0$ and $0 \le \alpha < \beta < \infty$, 
denoted $\texp_{[\alpha,\beta]}(\lambda)$,
to be distribution given by the probability density function 
$$g_{\lambda,[\alpha,\beta]}(x) = \frac{1}{\lambda(e^{- \alpha / \lambda} - e^{- \beta / \lambda})}e^{- x/ \lambda}  \quad \quad \forall x \in [\alpha,\beta] \;.$$
Note that this is the pdf of an exponential random variable with mean $\lambda$ that is conditioned to be in the range $[\alpha,\beta]$.
\paragraph{$r$-Nets.} Given a metric space $(V,d)$, an {\em $r$-net} of $(V,d)$ is a set $Y \subseteq V$ satisfying
\begin{enumerate}
	\item Packing: For all distinct $u,v \in Y$ we have $d(u,v)>r$.
	\item Covering: For every $v \in V$, there is some $u \in Y$ such that $d(v,u) \le r$.
\end{enumerate}
\section{Decomposing Path-Separable Graphs}
In this section we prove Theorem~\ref{th:sepToDec}. 
We present a procedure which, given a $p$-path-separable graph $G$ and a parameter $\Delta>0$, produces a random partition $\Pi$ of $V$. 
The algorithm works in two phases. The first phase, presented in detail as Algorithm~\ref{alg:pPathCen}, constructs a sequence of centers.
This is performed deterministically and by recursion. 
The algorithm finds a path-separator $S$ of $G$ and chooses a $\Delta/4$-net on each path of the separator to serve as centers. For every center vertex, the algorithm chooses a corresponding subgraph of $G$. 
The algorithm is then invoked recursively on every $S$-flap.
The second phase, presented in detail in Algorithm~\ref{alg:pPathRec}, samples random radii and carves balls around the centers to obtain a partition of $V$. The radii are all sampled independently at random from a truncated exponential distribution.

\afterpage{
\begin{algorithm}[ht]
\begin{algorithmic}[1]
\REQUIRE A $p$-path-separable graph $G$ and a parameter $\Delta$.
\ENSURE A sequence $(c_1,G_1), (c_2,G_2), \ldots$.
\STATE let $P_1,\ldots,P_m$, $G_1,\ldots,G_m$ be as in Definition~\ref{def:pathDec}.
\FOR{$j=1$ to $m$}
\FORALL{paths $P \in P_j$}
\STATE let $(c_1,c_2,\ldots)$ be a $\Delta/4$-net of $(P,d_G|_P)$ in an arbitrary order.
\STATE let $N_j$ be the sequence $(c_1,G_j), (c_2,G_j), \ldots$.
\ENDFOR
\ENDFOR
\STATE let $N$ be the concatenation of $N_1, N_2, \ldots , N_m$ in that order.
\FORALL{connected components $G'$ of $G \setminus P_{[m]}$} \label{l:rec}
\STATE invoke Choose-Centers$(G',\Delta)$ and append the output sequence to $N$.
\ENDFOR
\RETURN $N$.
\caption{Choose-Centers$(G,\Delta)$}
\label{alg:pPathCen}
\end{algorithmic}
\end{algorithm}

\begin{algorithm}[ht]
\begin{algorithmic}[1]
\REQUIRE A $p$-path-separable graph $G$ and a parameter $\Delta$.
\ENSURE A partition $\Pi$ of $V$.
\STATE let $(c_1,G_1), (c_2,G_2), \ldots$ be the sequence returned by Choose-Centers$(G,\Delta)$. 
\STATE let $\lambda \leftarrow \frac{\Delta}{10\ln (9p \log n)}$.
\STATE let $\Pi \leftarrow \emptyset$.
\FORALL{$j \ge 1$}
\STATE choose $R_j \sim \texp_{[\Delta/4,2\Delta/5]}(\lambda)$ independently at random.
\STATE let $S_j \leftarrow B_{G_j}(c_j,R_j) \setminus \left( \bigcup_{S \in \Pi}S \right)$.
\STATE let $\Pi \leftarrow \Pi \cup \{S_j\}$.
\ENDFOR
\RETURN $\Pi \setminus \{ \emptyset \}$.
\caption{Decomposing Path-Separable Graphs}
\label{alg:pPathRec}
\end{algorithmic}
\end{algorithm}
}
\vspace{1pc}
Denote by $\tilde{N}$ the union of all $\Delta/4$-nets throughout the execution of Algorithm~\ref{alg:pPathCen}. Note that $\tilde{N}$ is independent of the random radii (in fact, it is constructed deterministically). For sake of clarity, for a center $t \in \tilde{N}$ let $G_t,R_t,B_t$ be the subgraph, radius and ball corresponding to $t$ respectively.
To prove that Algorithm~\ref{alg:pPathRec} produces a partition of $V$, consider $v \in V$. During the execution of Algorithm~\ref{alg:pPathCen} there exists a subgraph $G'$ of $G$ such that Algorithm~\ref{alg:pPathCen} is (recursively) invoked on $G'$ and $v$ is in the path-separator $S$ of $G'$ chosen by the algorithm (in fact, $G'$ is unique). 
Let $P$ be the path in $S$ such that $v \in P$, let $c \in P$ be the closest net point to $v$, and let $G''$ be the respective subgraph, then $P \subseteq G''$. By the definition of a net, $d_P(v,c) \le \Delta/4$, and therefore $v \in B_P(c, \Delta/4) \subseteq B_{G''}(c, R_c)$ (recall $R_c \ge \Delta/4$ is the radius chosen for $c$). Therefore $\bigcup_{S \in \Pi}S = V$, and Algorithm~\ref{alg:pPathRec} indeed outputs a partition of $V$.

To prove the diameter requirement, let $S \in \Pi$, and let $x,y \in S$. Then there exists $t \in \tilde{N}$ such that $S \subseteq B_t \subseteq B_G(t,2\Delta/5)$ and therefore $d_G(x,y) < \Delta$.

Next, we prove the padding property of the decomposition. 
Let $x \in V$, and let $0 \le \gamma \le 1/80$. Denote $B = B_G(x,\gamma \Delta)$.
We say that $B$ is {\em settled by $t \in \tilde{N}$} if $B_t$ is the first ball (in order of execution) to have non-empty intersection with $B$. Therefore, $B \not\subseteq \Pi(x)$ iff $B$ is settled by $t$ and $B_t \cap B \ne B$ for some $t \in \tilde{N}$.
Let $\tilde{N}_x \eqdef \{ t \in \tilde{N} : \Pr[B_t \cap B \ne \emptyset] > 0 \}$ be the set of centers that threaten $x$. In order to bound the size of $\tilde{N}_x$ we consider the execution of Algorithm~\ref{alg:pPathCen}.
Consider first a single recursion level. Denote the current graph by $G'$, and let $P_1',\ldots,P_m'$ and $G_1',\ldots,G_m'$ be as in Definition~\ref{def:pathDec}. Let $j \in [m]$ and let $P$ be some path in $P_j'$.  Consider the $\Delta/4$-net $N$ picked by the algorithm. Since $P$ is a shortest-path in $G_j'$, 
$$\big|\{ t \in N : \Pr[ B_t \cap B \ne \emptyset] > 0 \}\big| \le \big|\{t \in N : B_{G_j'}(t,2\Delta/5) \cap B \ne \emptyset \}\big| \;.$$
Let $s,t \in N$ be such that $B_{G_j'}(s,2\Delta/5) \cap B \ne \emptyset$ and $B_{G_j'}(t,2\Delta/5) \cap B \ne \emptyset$. Since $P$ is a shortest path in $G_j'$, we get that $d_{P}(s,t) \le 2\Delta/5 + 2 \gamma \Delta + 2\Delta/5 < \Delta$. Therefore 
$$\big|\{t \in N : B_{G_j'}(t,2\Delta/5) \cap B \ne \emptyset \}\big| \le \frac{\Delta}{\Delta/4} = 4 \;.$$
Since in every recursive call of Algorithm~\ref{alg:pPathCen}, the number of vertices in the input graph is reduced by at least a factor of $1/2$, the depth of the recursion is at most $\log n$.
Every recursion level contains exactly one subgraph that contains $x$. Since the number of paths in each such subgraph is at most $p$, we conclude that $|\tilde{N}_x| \le 4 p \log n$. For simplicity, let us further assume this inequality holds with equality, and denote $k \eqdef 4 p \log n$.

Let $t_1,\ldots,t_k$ be the elements of $\tilde{N}_x$ in the order in which the algorithm considers them.
Denote by ${\cal E}$ the event that $B \not\subseteq \Pi(x)$, and for every $i \in [k]$, denote by ${\cal E}_i$ the event that $B$ was not settled before $t_i$ was considered.
\begin{lemma} \label{l:ind}
$\Pr[{\cal E} \mid {\cal E}_j] \le \left( 1 + \frac{k-j+1}{k^{3/2} - 1} \right)(1-e^{- 20 \gamma \ln k})$ for all $j \in [k]$.
\end{lemma}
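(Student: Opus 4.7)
I will prove the lemma by reverse induction on $j$, descending from $j = k$ to $j = 1$. The backbone is the recursive identity
\[
\Pr[{\cal E}\mid{\cal E}_j] \;=\; \Pr[C_j] \;+\; \Pr[M_j]\cdot \Pr[{\cal E}\mid{\cal E}_{j+1}],
\]
where $C_j$ is the event that $B_{t_j}$ cuts $B$ (intersects but does not contain it) and $M_j$ is the event that $B_{t_j}\cap B = \emptyset$. Both events depend only on the single radius $R_j$, which is drawn independently of all other radii, so the identity follows by partitioning ${\cal E}_j$ according to whether $t_j$ cuts, contains, or misses $B$ (containing rules out ${\cal E}$, while missing passes the residual problem to ${\cal E}_{j+1}$).

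\textbf{Key estimate.} The heart of the proof is an upper bound
\[
\Pr[C_j] \;\le\; \Bigl(1 + \tfrac{1}{k^{3/2}-1}\Bigr)\bigl(1 - e^{-20\gamma\ln k}\bigr),
\]
obtained by integrating the density of $\texp_{[\Delta/4,\,2\Delta/5]}(\lambda)$. Setting $a_j = \min_{v \in B \cap V(G_{t_j})} d_{G_{t_j}}(t_j, v)$ and $b_j' = \max_{v \in B \cap V(G_{t_j})} d_{G_{t_j}}(t_j, v)$, the event $C_j$ corresponds to $R_j$ lying in the window $[a_j, b_j') \cap [\Delta/4, 2\Delta/5]$, which has length at most $2\gamma\Delta$ by the triangle inequality applied to pairs in $B$. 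The integral is maximized when the window is anchored at $\Delta/4$, the peak of the density, yielding the stated bound after substituting $\lambda$ from Algorithm~\ref{alg:pPathRec}.

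\textbf{Induction.} The base case $j = k$ reduces to $\Pr[{\cal E}\mid{\cal E}_k] = \Pr[C_k]$, which is immediate from the key estimate. For the inductive step, plug $\Pr[{\cal E}\mid{\cal E}_{j+1}] \le \bigl(1 + (k-j)/(k^{3/2}-1)\bigr)(1-e^{-20\gamma\ln k})$ into the recursion, and combine with the key estimate on $\Pr[C_j]$ and a matching expression for $\Pr[M_j]$. The coefficient $(k-j+1)/(k^{3/2}-1)$ exceeds $(k-j)/(k^{3/2}-1)$ by exactly $1/(k^{3/2}-1)$, which is precisely what is needed to absorb the $\Pr[C_j]$ contribution.

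\textbf{Main obstacle.} The delicate step will be aligning the bounds on $\Pr[C_j]$ and $\Pr[M_j]$ in the inductive step. When $a_j$ is shifted away from $\Delta/4$, both $\Pr[C_j]$ and $\Pr[R_j \ge a_j]$ shrink by the common factor $e^{-(a_j - \Delta/4)/\lambda}$, so their ratio remains controlled; when $b_j' > 2\Delta/5$ (so $B_{t_j}$ cannot fully contain $B$, e.g.\ if $B \not\subseteq V(G_{t_j})$), $\Pr[R_j \ge a_j]$ is itself exponentially small and absorbed into the $1/(k^{3/2}-1)$ correction. A careful case analysis verifies that this correction suffices in all regimes.
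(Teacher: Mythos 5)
Your approach tracks the paper's own proof closely: reverse induction from $j=k$ down to $j=1$, the identical recursion $\Pr[{\cal E}\mid{\cal E}_j]=\Pr[C_j]+\Pr[M_j]\,\Pr[{\cal E}\mid{\cal E}_{j+1}]$ obtained by partitioning on whether $t_j$ cuts, contains, or misses $B$, and the same base-case integral over the truncated-exponential density anchored at $\Delta/4$.

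One point in your sketch of the inductive step deserves a warning, though you essentially recover from it in your ``Main obstacle'' paragraph. The sentence asserting that the gap $\tfrac{1}{k^{3/2}-1}$ ``is precisely what is needed to absorb the $\Pr[C_j]$ contribution'' does not hold if read literally: your uniform key estimate gives $\Pr[C_j]\le\bigl(1+\tfrac{1}{k^{3/2}-1}\bigr)(1-e^{-20\gamma\ln k})$, so $\Pr[C_j]$ is of order $\gamma\ln k$, not of order $1/k^{3/2}$, and pairing that uniform bound with $\Pr[M_j]\le 1$ yields roughly $\bigl(2+\tfrac{k-j+1}{k^{3/2}-1}\bigr)(1-e^{-20\gamma\ln k})$, which is worse than the claim by nearly a factor of $2$. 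The inductive step genuinely requires the $a_j$-dependent bounds you gesture at afterwards: with $q=e^{-(a_j-\Delta/4)/\lambda}$ one has $\Pr[C_j]\le\tfrac{k^{3/2}}{k^{3/2}-1}\,q\,(1-e^{-20\gamma\ln k})$ and $\Pr[M_j]=\tfrac{k^{3/2}}{k^{3/2}-1}(1-q)$, and plugging into the recursion reduces the target inequality to $q\ge k^{-3/2}$, which holds because $a_j\le 2\Delta/5$ forces $(a_j-\Delta/4)/\lambda\le\tfrac32\ln k$. That is exactly the calculation the paper carries out; you should state it explicitly rather than leaving it as a ``careful case analysis'' to be verified, since it is the crux of the lemma and the uniform key estimate alone does not close it.
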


\begin{proof}
Consider some $j \in [k]$. Conditioned on ${\cal E}_j$, there are three possible outcomes to the $j$th round. Either $B$ was not settled also in the $j$th round, and then (for $j \le k-1$) the final status of $B$ is left to the following rounds, or $B$ was settled in the $j$th round, and in that case, either $B \not\subseteq \Pi(x)$, (and thus ${\cal E}$ occurred), or $B \subseteq \Pi(x)$.
The "bad" event ${\cal E}$ can therefore only occur (either in the $j$th round or some time in the future) if either $j \le k-1$ and $R_{t_j} < d_G(x,t_j)- \gamma \Delta$ or if $\big| R_{t_j} - d_G(x,t_j) \big| \le \gamma \Delta$.

Denote $a = \max\{\Delta/4, d(x,t_j) - \gamma \Delta\}$, $b = \min\{d(x,t_j) + \gamma \Delta ,2\Delta/5 \}$, then 
\begin{equation*}
\begin{split}
\Pr\big[\; \big| R_{t_j} - d_G(x,t_j) \big| \le \gamma \Delta \; \big] &= \int_{a}^{b}{\frac{1}{\lambda(k^{-10/4} - k^{-20/5})}e^{- x/ \lambda}dx}\\
&= \frac{e^{- \frac{\Delta/4}{\lambda}}}{k^{-10/4} - k^{-20/5}} (1-e^{- \frac{b-a}{\lambda} }) \cdot e^{- \frac{a- \Delta/4}{\lambda}} \; .
\end{split}
\end{equation*}
Since $b-a \le 2 \gamma \Delta$, then $1-e^{- \frac{b-a}{\lambda}} \le 1-e^{- 20 \gamma \ln k}$. Substituting $e^{- \frac{\Delta/4}{\lambda}} = k^{-2.5}$ we get that 
\begin{equation}
\Pr\big[\; \big| R_{t_j} - d_G(x,t_j) \big| \le \gamma \Delta\; \big] \le \frac{k^{3/2}}{k^{3/2}-1} (1-e^{- 20 \gamma \ln k}) \cdot e^{- \frac{a- \Delta/4}{\lambda}}\;.
\label{eq:cutBall}
\end{equation}
Similarly we get that 
\begin{equation}
\begin{split}
\Pr[R_{t_j} < d_G(x,t_j)- \gamma \Delta] \le \frac{k^{3/2}}{k^{3/2} - 1} ( 1 - e^{- \frac{a- \Delta/4}{\lambda}}) \;.
\end{split}
\label{eq:notSettle}
\end{equation}
The proof proceeds by induction over $j= k, k-1,\ldots,1$.
As previously noted, $$\Pr[{\cal E} \mid {\cal E}_k] \le \Pr\big[\; \big| R_{t_k} - d_G(x,t_k) \big| \le \gamma \Delta\; \big] \;.$$ Plugging \eqref{eq:cutBall} we get that
\begin{equation*}
\Pr[{\cal E} \mid {\cal E}_k] \le \frac{k^{3/2}}{k^{3/2}-1} (1-e^{- 20 \gamma \ln k}) \cdot e^{- \frac{a- \Delta/4}{\lambda}} \le \left( 1 + \frac{1}{k^{3/2}-1}\right) (1-e^{- 20 \gamma \ln k}) \;,
\end{equation*}
where the last inequality follows from the fact that $a \ge \Delta/4$. 
Next, let $j \in [k-1]$. Then as previously explained,
\begin{equation}
\Pr[{\cal E} \mid {\cal E}_j] = \Pr\Big[\big| R_{t_j} - d_G(x,t_j) \big| \le \gamma \Delta\Big] +\Pr[R_{t_j} < d_G(x,t_j) - \gamma \Delta] \cdot \Pr[{\cal E} \mid {\cal E}_{j+1}] \; .
\label{eq:recursion}
\end{equation}
By plugging \eqref{eq:cutBall} and \eqref{eq:notSettle} into \eqref{eq:recursion} and using the induction hypothesis, we get that

\begin{equation*}
\begin{split}
\Pr[{\cal E} \mid {\cal E}_j] &\le
\frac{k^{3/2}}{k^{3/2} - 1}\left( e^{- \frac{a- \Delta/4}{\lambda}} + ( 1 - e^{- \frac{a - \Delta/4}{\lambda}})\left(1 + \frac{k-j}{k^{3/2}-1} \right) \right)(1-e^{- 20 \gamma \ln k})\\
&= \frac{k^{3/2}}{k^{3/2} - 1}\left( 1 + \frac{( 1 - e^{- \frac{a- \Delta/4}{\lambda}})(k-j)}{k^{3/2}-1} \right)(1-e^{- 20 \gamma \ln k}) \\
&\le \left( 1 + \frac{k-j+1}{k^{3/2} - 1} \right)(1-e^{- 20 \gamma \ln k}) \;,
\end{split}
\end{equation*}
which proves Lemma~\ref{l:ind}.
\end{proof}
To complete the proof of Theorem~\ref{th:sepToDec}, note that since $\gamma \le 1/80$ and for $k \ge 3$,
\begin{equation*}
\begin{split}
\Pr[{\cal E}] =
\Pr[{\cal E} \mid {\cal E}_1] &\le \left( 1 + \frac{k}{k^{3/2} - 1} \right)(1-e^{- 20 \gamma \ln k}) \\ 
&\le  \left( 1 + e^{-20 \gamma \ln k} \right)(1-e^{- 20 \gamma \ln k}) = (1-e^{- 40 \gamma \ln k})\; .
\end{split}
\end{equation*}
By setting $\beta = \frac{40 \ln k}{\ln 2} = O(\ln (p \ln n))$ we get that $\Pr[B(x,\gamma \Delta) \subseteq \Pi(x) \ ] \ge 2^{- \beta \gamma}$, thus proving the last part of Theorem~\ref{th:sepToDec}.

\paragraph{Acknowledgements.}
The authors thank Ittai Abraham, Alex Andoni, and Kunal Talwar for useful discussions during preliminary stages of this work.

\newcommand{\etalchar}[1]{$^{#1}$}

\end{document}